\newtheorem{theorem}{Theorem}
\newtheorem{lemma}[theorem]{Lemma}
\newtheorem{corollary}[theorem]{Corollary}
\newcommand{\expec}{\mathbb{E}}
\newcommand{\ind}{\mathbf{1}}
\title[Asymptotics for the Merton and Kou jump diffusion models]{Refined wing asymptotics for the Merton and Kou jump diffusion models}
\author{Stefan Gerhold}
\address{Vienna University of Technology, Wiedner Hauptstra\ss{}e 8--10/105-1,
A-1040 Vienna, Austria}
\email{sgerhold@fam.tuwien.ac.at}
\thanks{This note is based in part on the thesis~\cite{Zr13},
where some proofs are discussed in greater detail.
S.~Gerhold gratefully acknowledges financial support from the
Austrian Science Fund (FWF) under grant P~24880-N25.}
\author{Johannes F. Morgenbesser}
\address{University of Vienna, Oskar-Morgenstern-Platz 1, A-1090 Vienna, Austria}
\email{johannes.morgenbesser@univie.ac.at}
\author{Axel Zrunek}
\email{axel.zrunek@aon.at}
\date{\today}
\begin{document}

\begin{abstract}
  Refining previously known estimates, we give large-strike asymptotics
  for the implied volatility of Merton's and Kou's jump diffusion models.
  They are deduced from call price approximations by transfer
  results of Gao and Lee. For the Merton model, we also analyse the density of the underlying and show that it features an interesting ``almost power law'' tail.  
\end{abstract}

\keywords{Implied volatility, jump diffusion, Kou model, Merton model,
saddle point method}

\subjclass[2010]{91G20, 41A60}

\maketitle

\section{Introduction}

In recent years, the literature on asymptotic approximations of 
option prices and implied volatilities has grown at a fast pace.
Important papers on large-strike asymptotics include
\cite{BeFr08,BeFr09,Gu10,Le04a}.
Of particular relevance to the present note is the approach
of Gao and Lee~\cite{GaLe11}, who translate call price
asymptotics to implied volatility asymptotics, robustly
w.r.t.\ choice of model and asymptotic regime.
So far, relatively few models have been analyzed in sufficient
detail to use the full power of their transfer results.
We extend concrete
applicability of some results of~\cite{GaLe11} by presenting refined
strike asymptotics
for the well-known jump diffusion models of Merton~\cite{Me76} and Kou~\cite{Ko02}.
Potential practical consequences of our expansions concern
fast calibration and implied volatility parametrization resp.\ extrapolation.

As we are in a fixed-maturity regime, we can set the interest rate
to zero throughout. Also, initial spot is normalized to $S_0=1$.
Log-returns are modeled by a L\'evy jump diffusion 
\[
  X_t = b t + \sigma W_t + \sum_{j=1}^{N_t} Y_j
\]
with drift $b\in\mathbb{R}$ and
diffusion volatility $\sigma>0$. The process~$W$ is  a standard Brownian motion,
$N$ is a Poisson process with intensity $\lambda>0$, and the jumps
$Y_j$ are i.i.d. real random variables. As for the law of the~$Y_j$,
we focus on the double exponential (Kou) and Gaussian (Merton) cases.
The (dimensionless) implied volatility $V(k)$ is the solution
of
\[
  c_{\,\mathrm{BS}}(k,V(k)) = \expec[(S_T - e^k)^+],
\]
where
\[
  c_{\,\mathrm{BS}}(k,v) = \Phi(d_1) - e^k \Phi(d_2)
\]
is the Black-Scholes call price,
with $d_{1,2}=-k/v \pm v/2$ and $\Phi$ the standard Gaussian cdf.
Our interest is in the growth order of~$V(k)$ as $k\to\infty$.
While first order asymptotics are known for both models we treat,
they suffer from limited practical applicability. Higher
order terms typically increase accuracy significantly, even for
moderate values of the log-strike~$k$.

We observe that the large-strike behavior of the Kou model
is of the same shape (in terms of the asymptotic elements
involved) as for the Heston model~\cite{FrGeGuSt11}. Not obvious from
the respective model dynamics, this fact is an immediate consequence
of the local behavior of the moment generating function (mgf) at the
critical moments. This behavior was analyzed in~\cite{FrGeGuSt11}
from affine principles, whereas the present analysis for the Kou
model profits from its very simple explicit mgf.

For the Merton model, we include an approximation of the density
(Theorem~\ref{mjd_density}). The reason is that it implies
an interesting ``almost power law'' tail for the marginals of the underlying,
of order $k^{-\sqrt{\log \log k}}$.

Throughout the paper, $F(k) \ll G(k)$ means that the functions~$F$
and~$G$ satisfy $F(k) = O(G(k))$ for $k\to\infty$.

\section{Kou jump diffusion model}

In the Kou model~\cite{Ko02},
the~$Y_j$ are double exponentially distributed, and thus
have the common density
\[
f(y) = p \lambda_+e^{-\lambda_+y}\ind_{[0,\infty)}(y) + (1-p)\lambda_-e^{\lambda_- y}\ind_{(-\infty,0)}(y)
\]
with parameters $\lambda_+ >1$, $\lambda_- > 0$ and $p\in (0,1)$.
One of the advantages of this model is the memoryless property
of the double exponential distribution, which leads to analytical
formulas for several types of options.
The moment generating function of the log-price~$X_T$ is given by
\begin{align}
M(s,T) &= \expec[\exp(sX_T)] \notag \\
&= \exp{\left(T\left(\frac{\sigma^2s^2}{2}+bs+\lambda\left(\frac{\lambda_+p}{\lambda_+-s}+\frac{\lambda_-(1-p)}{\lambda_-+s}-1\right)\right)\right)}. \label{kou_M}
\end{align}

From Benaim and Friz' refinement of Lee's moment formula (Example~5.3
in~\cite{BeFr08}), it is known that~$V$  has the first order asymptotics
\begin{equation}\label{kou_firstorder}
\lim_{k \to \infty} \frac{V(k)}{k^{1/2}} = \Psi^{1/2} (\lambda_+ -1),
\end{equation}
where $\Psi(x)$ is defined by 
\begin{equation}\label{psi}
\Psi(x) = 2 -4(\sqrt{x^2+x}-x).
\end{equation}
To formulate our refined expansion for the Kou call price, define
\begin{align*}
 \alpha_1 &= \lambda_+ -1, \qquad \alpha_{1/2} = -2(\lambda\lambda_+pT)^{1/2}, \\
 \alpha_0 &=
T\left(-\frac{\sigma^2\lambda_+^2}{2}-b\lambda_+-\frac{\lambda\lambda_-(1-p)}{\lambda_-+\lambda_+}+\lambda \right) 
 -\log\frac{(\lambda\lambda_+pT)^{1/4}}{2\sqrt{\pi}\lambda_+(\lambda_+-1)}.
\end{align*}
For better readability, the notation here ($\alpha_i$, and~$\beta_i$ below)
is as in Corollary~7.11 of~\cite{GaLe11}; coefficient indices
mimic the asymptotic terms they belong to.
\begin{theorem}\label{thm:call kou}
The price of a call option in the Kou model satisfies
\begin{equation}\label{eq:kou call}
  C(k,T) =    
    \exp\left(-\alpha_1k-\alpha_{1/2}k^{-1/2}
    -\alpha_0\right)k^{-3/4}(1+O(k^{-1/4}))
\end{equation}
as $k \to \infty$.
\end{theorem}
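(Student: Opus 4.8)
The plan is to represent the call price as a Mellin--Bromwich integral and to evaluate it by the saddle point method; the feature that drives everything is that, by~\eqref{kou_M}, the mgf $M(\cdot,T)$ has an essential singularity at the critical moment $s=\lambda_+$, exactly as the Heston mgf does at its critical moment, which is why the answer has the same shape as in~\cite{FrGeGuSt11}. Fix $c\in(1,\lambda_+)$. From the elementary identity $(\e^{y}-1)^{+}=\tfrac{1}{2\pi\im}\int_{c-\im\infty}^{c+\im\infty}\frac{\e^{sy}}{s(s-1)}\,\dd s$ (valid for any $c>1$), writing $(S_T-\e^{k})^{+}=\e^{k}(\e^{X_T-k}-1)^{+}$, taking $\expec$ and using Fubini — which is legitimate since $|M(c+\im\tau,T)|\le M(c,T)\,\e^{-T\sigma^{2}\tau^{2}/2}$, thanks to $\sigma>0$ — one gets
\[
  C(k,T)=\frac{1}{2\pi\im}\int_{c-\im\infty}^{c+\im\infty}\frac{M(s,T)}{s(s-1)}\,\e^{-(s-1)k}\,\dd s .
\]
The same Gaussian bound, which also holds with $c$ replaced by the saddle $s_0$ found below, shows that only a shrinking neighbourhood of the real axis contributes.

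Next I would locate the saddle. Put $\psi(s)=\log M(s,T)-(s-1)k-\log s-\log(s-1)$, so that the integrand equals $\e^{\psi(s)}$, and record from~\eqref{kou_M} that on the strip $1<\mathrm{Re}\,s<\lambda_+$ one has $\log M(s,T)=\frac{\lambda\lambda_{+}pT}{\lambda_{+}-s}+h(s)$ with $h$ holomorphic at $\lambda_+$ and $h(\lambda_+)=T\bigl(\tfrac{\sigma^{2}\lambda_{+}^{2}}{2}+b\lambda_{+}+\lambda\bigl(\tfrac{\lambda_{-}(1-p)}{\lambda_{-}+\lambda_{+}}-1\bigr)\bigr)$ — precisely the non-logarithmic part of $-\alpha_0$. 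The saddle equation $\psi'(s_0)=0$, i.e.\ $\partial_s\log M(s_0,T)=k+\tfrac1{s_0}+\tfrac1{s_0-1}$, together with $\partial_s\log M(s,T)=\tfrac{\lambda\lambda_{+}pT}{(\lambda_{+}-s)^{2}}+\Oh(1)$ near $\lambda_+$, has a unique real root $s_0=s_0(k)=\lambda_+-u_0$ in $(1,\lambda_+)$ with $u_0=(\lambda\lambda_{+}pT)^{1/2}k^{-1/2}\bigl(1+\Oh(k^{-1})\bigr)$; this accuracy will suffice below. The contour is then deformed to the vertical line $\mathrm{Re}\,s=s_0$, across which no singularity of the integrand is met ($s=0,1$ lie to the left of it, $s=\lambda_+$ to the right).

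The saddle point evaluation would then proceed along $s=s_0+\im t$. Here $\psi''(s_0)=\tfrac{2\lambda\lambda_{+}pT}{u_0^{3}}+\Oh(1)\sim\tfrac{2}{(\lambda\lambda_{+}pT)^{1/2}}\,k^{3/2}$, which dictates an effective integration window of width $\asymp k^{-3/4}$ — comfortably inside the distance $u_0\asymp k^{-1/2}$ from $s_0$ to the singularity — and a Gaussian prefactor $(2\pi\psi''(s_0))^{-1/2}=\tfrac{(\lambda\lambda_{+}pT)^{1/4}}{2\sqrt{\pi}}\,k^{-3/4}\bigl(1+\oh(1)\bigr)$. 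Standard saddle point estimates then give
\[
  C(k,T)=\frac{M(s_0,T)\,\e^{-(s_0-1)k}}{s_0(s_0-1)\,\sqrt{2\pi\psi''(s_0)}}\,\bigl(1+\Oh(k^{-1/4})\bigr),
\]
and it remains to substitute $\log M(s_0,T)-(s_0-1)k=-(\lambda_+-1)k+\bigl(u_0k+\tfrac{\lambda\lambda_{+}pT}{u_0}\bigr)+h(s_0)$, using $u_0k+\tfrac{\lambda\lambda_{+}pT}{u_0}=2(\lambda\lambda_{+}pT)^{1/2}k^{1/2}+\oh(1)$ (the intermediate $k^{-1/2}$ contributions cancel), $h(s_0)=h(\lambda_+)+\Oh(k^{-1/2})$ and $\tfrac1{s_0(s_0-1)}=\tfrac1{\lambda_+(\lambda_+-1)}+\Oh(k^{-1/2})$. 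The constant then reassembles into exactly $-\alpha_0$, since the term $-\log\frac{(\lambda\lambda_{+}pT)^{1/4}}{2\sqrt{\pi}\lambda_{+}(\lambda_{+}-1)}$ in $\alpha_0$ is nothing but $-\log$ of the product of $\tfrac1{\lambda_+(\lambda_+-1)}$ with the Gaussian-prefactor constant $\tfrac{(\lambda\lambda_{+}pT)^{1/4}}{2\sqrt{\pi}}$; this yields the exponent $-\alpha_1k-\alpha_{1/2}k^{1/2}-\alpha_0$ and the factor $k^{-3/4}$ of~\eqref{eq:kou call}.

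The delicate point — and the one I expect to take the most work — is the error bookkeeping in the last step: expanding $s_0(k)$ (equivalently $u_0$) far enough to pin the constant down, checking that the intermediate $k^{-1/2}$ terms in the exponent cancel as stated, and controlling the saddle point remainder to order $\Oh(k^{-1/4})$, which requires care since the Taylor coefficients $\psi^{(j)}(s_0)\asymp k^{(j+1)/2}$ blow up with $k$ (so one rescales $s=s_0+\im t(\psi''(s_0))^{-1/2}$ and truncates the vertical line at a slowly growing multiple of the window width). The tail of the contour beyond that is negligible by the Gaussian bound on $M$, and the only conceptual ingredient — the $\tfrac{c}{\lambda_+-s}$-type singularity of $\log M$ at the critical moment — is shared with the Heston model, so the saddle point lemma of~\cite{FrGeGuSt11} transfers after relabelling, while the explicit formula~\eqref{kou_M} makes all the constants above completely explicit.
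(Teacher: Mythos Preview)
Your approach is essentially the paper's: saddle point analysis of the Mellin representation~\eqref{call_mellin}, with the singularity structure ``exponential of a first-order pole'' at $s=\lambda_+$ driving everything. The only cosmetic difference is that you solve for the exact saddle of the full integrand (including the factor $1/s(s-1)$), while the paper uses the explicit approximate saddle $\hat s=\lambda_+-\xi^{1/2}k^{-1/2}$ of the dominant piece alone; this changes nothing of substance.

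There is, however, one step that does not go through as you describe it. You claim that the tail of the vertical contour is negligible ``by the Gaussian bound on $M$'', meaning $|M(s_0+\im\tau,T)|\le M(s_0,T)\e^{-T\sigma^2\tau^2/2}$. That bound is correct, but it is far too weak here: integrating it over $|\tau|>R$ with $R$ any shrinking quantity gives only $O\bigl(M(s_0,T)\e^{-(s_0-1)k}\bigr)$, which is a factor $k^{3/4}$ \emph{larger} than the main term. The diffusion Gaussian factor $\e^{-T\sigma^2\tau^2/2}$ is effectively~$1$ on the scales $\tau\asymp k^{-\alpha}$ that matter. The actual mechanism that kills the near tail (see the paper's tail lemma) is the decay of the pole contribution itself: on the line $\Re s=s_0$ one has $\Re\bigl(\xi/(\lambda_+-s)\bigr)=\xi u_0/(u_0^2+\tau^2)$, and for $|\tau|\ge k^{-\alpha}$ with $\alpha\in(\tfrac23,\tfrac34)$ this drops below its peak value by at least $\tfrac12\xi^{-1/2}k^{3/2-2\alpha}\to\infty$. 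So the tail estimate must use the singular part of $\log M$, not the $\sigma^2$ term; once you replace your Gaussian-bound argument by this, the rest of your outline matches the paper's proof (and the relative error $O(k^{-1/4})$ and the assembly of the constant into $-\alpha_0$ are as you say).
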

The call price expansion~\eqref{eq:kou call} is amenable to the general
transfer results of Gao and Lee~\cite{GaLe11}. Indeed, their Corollary~7.11
immediately implies the following implied volatility expansion.
\begin{corollary}\label{kou_impvola_asym}
The implied volatiliy of the Kou model satisfies, as $k\to\infty$,
\begin{equation}\label{eq:kou iv}
V(k) = \beta_{1/2}k^{1/2} + \beta_0 + \beta_{\ell -1/2}\frac{\log k}{k^{1/2}} + \beta_{-1/2}\frac{1}{k^{1/2}} + O(k^{-3/4}),
\end{equation}
where
\begin{align*}
  \beta_{1/2} &= -2\gamma\sqrt{\alpha_1^2+\alpha_1} = \Psi^{1/2}(\lambda_+-1), \\   
  \beta_0 &= \gamma\alpha_{1/2}, \qquad \beta_{\ell-1/2}=\frac{\gamma}{4}, \\
  \beta_{-1/2} &= \left(\alpha_0 + \log\frac{1-(1+\frac{1}{\alpha_1})^{-1/2}}{\sqrt{4\pi\alpha_1}}\right)\gamma + \left(\frac{1}{2(2\alpha_1)^{3/2}}-\frac{1}{2(2\alpha_1+2)^{3/2}}\right)\alpha_{1/2}^2,
\end{align*}
and
\[
\gamma = \left(\frac{1}{\sqrt{2\alpha_1+2}}-\frac{1}{\sqrt{2\alpha_1}}\right).
\]
\end{corollary}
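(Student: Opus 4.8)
The identity \eqref{eq:kou iv} is not established from the model dynamics but by transferring the call price expansion of Theorem~\ref{thm:call kou} through the asymptotic inversion of the Black--Scholes formula; concretely it is an instance of Corollary~7.11 of~\cite{GaLe11}. Accordingly the plan has three steps. First, verify that \eqref{eq:kou call} is an admissible input for that corollary: it has the required shape $C(k,T) = \exp(-\alpha_1 k - \alpha_{1/2}k^{-1/2} - \alpha_0)\,k^{-3/4}(1 + O(k^{-1/4}))$, and the leading coefficient $\alpha_1 = \lambda_+ - 1$ is strictly positive because $\lambda_+ > 1$ (this finite critical moment $1+\alpha_1$ is what forces $\sqrt k$-growth of $V$), while the error exponent $-1/4$ controls the size of the resulting IV error. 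This ``Heston shape'' is precisely the case Gao and Lee isolate in their Corollary~7.11, which is why \cite{FrGeGuSt11} and the present note land on the same asymptotic form.

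Second, apply the corollary. It expresses each coefficient of the implied volatility expansion in the scale $\{k^{1/2},\,1,\,(\log k)k^{-1/2},\,k^{-1/2}\}$ as an explicit elementary function of $\alpha_1,\alpha_{1/2},\alpha_0$; substituting the Kou values gives $\beta_{1/2},\beta_0,\beta_{\ell-1/2},\beta_{-1/2}$ as stated, and the multiplicative $O(k^{-1/4})$ error of \eqref{eq:kou call} propagates to the additive $O(k^{-3/4})$ error of \eqref{eq:kou iv}. It is worth noting that the $(\log k)k^{-1/2}$ term, which has no counterpart in the scale of the call price expansion itself, is intrinsic to the inversion: it is produced by the $k^{-3/4}$ prefactor together with the logarithmic corrections that arise when one inverts the Gaussian tail in $c_{\,\mathrm{BS}}$. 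One should check here that the number and order of the terms retained in Theorem~\ref{thm:call kou} are exactly what the corollary needs to output four explicit terms of $V$.

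Third, reconcile the two closed forms that the statement asserts beyond the raw output of the corollary. For $\beta_{1/2}$, write $\gamma = (2\alpha_1+2)^{-1/2} - (2\alpha_1)^{-1/2}$, so that $-2\gamma\sqrt{\alpha_1^2 + \alpha_1} = \sqrt2\,(\sqrt{\alpha_1 + 1} - \sqrt{\alpha_1})$; since $\Psi(x) = 2 - 4(\sqrt{x^2 + x} - x) = 2(\sqrt{x+1} - \sqrt x)^2$, the right-hand side equals $\Psi^{1/2}(\alpha_1) = \Psi^{1/2}(\lambda_+ - 1)$. In particular \eqref{eq:kou iv} recovers the first order asymptotics \eqref{kou_firstorder}, a useful consistency check; the expression for $\beta_{-1/2}$ is likewise just a rearrangement of the corollary's formula. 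There is essentially no analytic obstacle: all the hard work sits in Theorem~\ref{thm:call kou} and in Gao and Lee's general transfer theory. The only points requiring genuine care are matching the hypotheses of Corollary~7.11 of~\cite{GaLe11} literally (admissible scale, sign and size of the coefficients, error order), which is eased by the deliberately compatible notation, and the elementary simplifications in the third step; should the cited corollary turn out to be stated in slightly greater generality, one would instead specialize their main transfer theorem, which is routine but calls for some care with the error bookkeeping.
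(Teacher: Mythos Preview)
Your proposal is correct and matches the paper's approach exactly: the corollary is obtained by feeding the call price expansion of Theorem~\ref{thm:call kou} into Corollary~7.11 of~\cite{GaLe11}, which the paper states in a single sentence without further detail. Your additional verification that $-2\gamma\sqrt{\alpha_1^2+\alpha_1}=\Psi^{1/2}(\lambda_+-1)$ and the remarks on hypothesis checking and error propagation are sound elaborations of what the paper leaves implicit.
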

The gain in numerical precision over the first order
approximation~\eqref{kou_firstorder} depends on the model parameters.
See Figures~\ref{fig:kou iv 1} and~\ref{fig:kou iv 2} for examples.
It is an interesting fact that the expansion~\eqref{eq:kou iv} has
the same shape as the implied vol expansion of the Heston
model (see Theorem~1.3 in~\cite{FrGeGuSt11}). While this is not obvious
from the model specifications, it is clear from an asymptotic analysis
of the respective mgfs: Call price and density asymptotics are governed
by the type of singularity found at the critical moment, and this
is ``exponential of a first order pole'' in both cases.
(For other papers analyzing functions with such behavior, see, e.g.,
\cite{FlGeSa10,Ge11b}.)
\begin{figure}
\begin{center}
\includegraphics[height=2.6in]{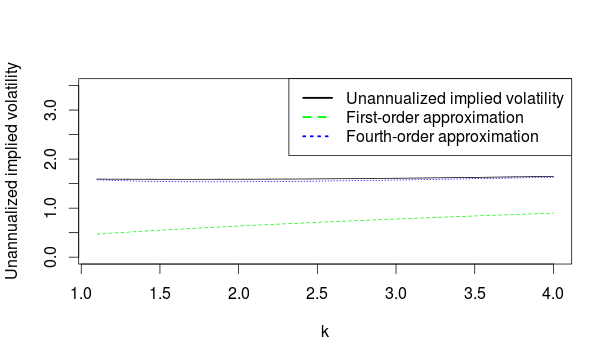}
\end{center}
\caption{\label{fig:kou iv 1}Implied volatility of the Kou model
(solid) compared with the fourth-order expansion~\eqref{eq:kou iv}
(dotted) and Lee's formula~\eqref{kou_firstorder} (dashed).
The parameters are $T=6$, $\sigma=0.4$, $\lambda=1$, $p=0.2$, $\lambda_-=2$, and $\lambda_+=3$. }
\end{figure}
\begin{figure}
\begin{center}
\includegraphics[height=2.6in]{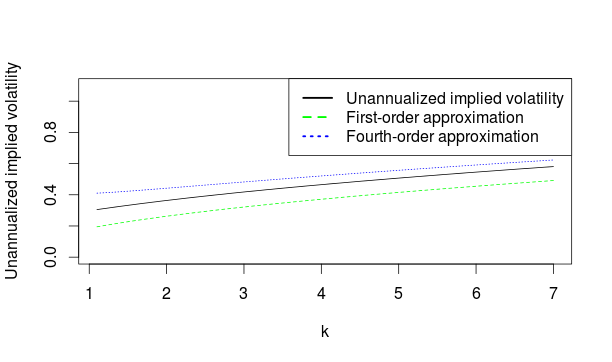}
\end{center}
\caption{\label{fig:kou iv 2} Implied volatility of the Kou model
(solid) compared with the fourth-order expansion~\eqref{eq:kou iv}
(dotted) and Lee's formula~\eqref{kou_firstorder} (dashed). The parameters are $T=1$, $\sigma=0.1$, $\lambda=5$, $\lambda_+=15$, $\lambda_-=15$, $p=0.5$. }
\end{figure}

We briefly comment on the qualitative implications of
Corollary~\ref{kou_impvola_asym}. The dominating term depends only
on~$\lambda_+$, i.e., the parameter that governs the propensity
to jump in the in-the-money direction. If~$\lambda_+$ increases,
upward jumps become smaller, and
the call clearly becomes cheaper, thus lowering implied volatility.
(Note that the function~$\Psi$ is decreasing.)
Second order asymptotics of implied vol, i.e., $\beta_0$
in~\eqref{eq:kou iv}, depend additionally on $\lambda$, $p$, and $T$.
It is remarkable that the smile wings are very insensitive to the
diffusion volatility~$\sigma$ and the downwards-jump parameter~$\lambda_-$,
which appear only in the $k^{-1/2}$-term in~\eqref{eq:kou iv}.

As the transfer from call price to implied vol asymptotics is
handled by~\cite{GaLe11} in a mechanical way, the rest of this section
is devoted to the proof of Theorem~\ref{thm:call kou}.
The mgf~\eqref{kou_M} is analytic in the strip $\Re s \in
(\lambda_-,\lambda_+)$.
By the exponential decay of $M(s,T)$ for $|\Im s| \to\infty$, the Fourier
representation
\begin{equation}\label{call_mellin}
C(k,T) = \frac{e^k}{2\pi i} \int_{\eta-i\infty}^{\eta+i\infty} e^{-ks}\frac{M(s,T)}{s(s-1)}ds
\end{equation} 
of the call price is valid; see Lee~\cite{Le04b}. The real 
part of the integration contour satisfies $1<\eta<\lambda_+$.
We prove Theorem~\ref{thm:call kou} by a saddle point analysis
of~\eqref{call_mellin}. 
(The series representation of the call price from Kou's paper~\cite{Ko02}
seems less amenable to asymptotic analysis.)
The integrand blows up as $s\to\lambda_+$.
Identifying the dominating term $1/(\lambda_+-s)$ in~\eqref{kou_M},
we define the (approximate) saddle point $\hat{s}=\hat{s}(k)$
as the solution of
\begin{equation*}
  \frac{\partial}{\partial s}e^{-ks}
  \exp\left(T\lambda\frac{\lambda_+p}{\lambda_+-s}\right) = 0,
\end{equation*}
which is given by
\begin{equation*}
  \hat{s} =  \lambda_+-\xi^{1/2}k^{-1/2},
\end{equation*}
where $\xi = \lambda\lambda_+pT$.
\begin{lemma}\label{kou_asym}
The cumulant generating function $m(s,T)=\log M(s,T)$ of $X_T$ satisfies 
\begin{align*}
 m(\hat{s},T) &=\frac{T\sigma^2\lambda_+^2}{2}+ b\lambda_+ T+ \xi^{1/2}k^{1/2} +  \frac{T\lambda\lambda_-(1-p)}{\lambda_-+\lambda_+} - \lambda T+ O(k^{-1/2}),  \\
m'(\hat{s},T) &= k + O(1), \\
m''(\hat{s},T) &= 2\xi^{-1/2}k^{3/2} + O(1), \\
 \quad m'''(\hat{s}+it,T) &= O(k^2) \quad \text{for } |t|<k^{-\alpha},~~\alpha >0,
\end{align*}
where all derivatives are with respect to $s$.
\end{lemma}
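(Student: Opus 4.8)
The plan is to compute each quantity directly from the explicit formula $m(s,T) = T\bigl(\tfrac{\sigma^2 s^2}{2} + bs + \lambda(\tfrac{\lambda_+ p}{\lambda_+ - s} + \tfrac{\lambda_-(1-p)}{\lambda_- + s} - 1)\bigr)$ by substituting $\hat s = \lambda_+ - \xi^{1/2} k^{-1/2}$ and expanding in powers of $k^{-1/2}$. The only term that is singular at $s = \lambda_+$ is $\tfrac{T\lambda\lambda_+ p}{\lambda_+ - s}$, so after substitution it becomes $\tfrac{T\lambda\lambda_+ p}{\xi^{1/2} k^{-1/2}} = T\lambda\lambda_+ p\, \xi^{-1/2} k^{1/2} = \xi^{1/2} k^{1/2}$, using $\xi = \lambda\lambda_+ p T$; this furnishes the leading $\xi^{1/2}k^{1/2}$ in $m(\hat s,T)$. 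The remaining pieces — $\tfrac{T\sigma^2 s^2}{2}$, $Tbs$, $\tfrac{T\lambda\lambda_-(1-p)}{\lambda_- + s}$, and the constant $-\lambda T$ — are analytic at $s = \lambda_+$, so evaluating at $\hat s$ and Taylor-expanding around $\lambda_+$ gives their values at $\lambda_+$ plus an $O(k^{-1/2})$ error, which is exactly the claimed expression.

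For the derivatives I would differentiate $m$ symbolically: $m'(s,T) = T(\sigma^2 s + b + \lambda(\tfrac{\lambda_+ p}{(\lambda_+ - s)^2} - \tfrac{\lambda_-(1-p)}{(\lambda_- + s)^2}))$ and $m''(s,T) = T(\sigma^2 + \lambda(\tfrac{2\lambda_+ p}{(\lambda_+ - s)^3} + \tfrac{2\lambda_-(1-p)}{(\lambda_- + s)^3}))$. Since $\hat s$ was defined precisely to kill the derivative of $e^{-ks}\exp(T\lambda\lambda_+ p/(\lambda_+ - s))$, one has $k = \tfrac{T\lambda\lambda_+ p}{(\lambda_+ - \hat s)^2} = \tfrac{\xi}{\xi k^{-1}} = k$ by construction, so $m'(\hat s,T)$ equals $k$ plus the bounded analytic remainder terms $T(\sigma^2\hat s + b - \tfrac{T\lambda\lambda_-(1-p)}{(\lambda_- + \hat s)^2})$, which are $O(1)$. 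Similarly, the dominant part of $m''$ is $\tfrac{2T\lambda\lambda_+ p}{(\lambda_+ - \hat s)^3} = 2\xi (\xi^{1/2}k^{-1/2})^{-3} = 2\xi\cdot\xi^{-3/2}k^{3/2} = 2\xi^{-1/2}k^{3/2}$, and the rest is $O(1)$. For $m'''(\hat s + it,T)$ with $|t| < k^{-\alpha}$, the dominant term is $\tfrac{6T\lambda\lambda_+ p}{(\lambda_+ - \hat s - it)^4}$; since $\lambda_+ - \hat s = \xi^{1/2}k^{-1/2}$ dominates $it$ for large $k$ (as $k^{-1/2}$ beats $k^{-\alpha}$ once... actually one needs $\alpha$ small, but the lemma quantifies over $\alpha>0$ so it suffices that $|\lambda_+ - \hat s - it| \gg k^{-1/2}$ fails — rather $|\lambda_+ - \hat s - it| \ge \xi^{1/2}k^{-1/2} - k^{-\alpha}$, which for the estimate to give $O(k^2)$ we bound $|\lambda_+ - \hat s - it|^{-4} = O(k^2)$ provided $|\lambda_+ - \hat s - it| \gg k^{-1/2}$); I would handle the regime by noting that for the saddle point method only $\alpha < 1/2$ is used, so $k^{-\alpha} \gg k^{-1/2}$ and hence $|\lambda_+ - \hat s - it| \le \xi^{1/2}k^{-1/2} + k^{-\alpha} = O(k^{-\alpha})$, giving $|\lambda_+ - \hat s - it|^{-4}$ potentially as large as $O(k^{4\alpha})$; to get the clean $O(k^2)$ one restricts to $\alpha \le 1/2$, whence $4\alpha \le 2$. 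The remaining terms in $m'''$ involving $(\lambda_- + s)^{-4}$ are $O(1)$, uniformly in the small $t$-range, so they are absorbed.

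The main obstacle is bookkeeping the uniformity in $t$ for the $m'''$ bound and making sure the stated $O(k^2)$ is the correct (and optimal-enough) order for the subsequent saddle point estimate — one must check that $\lambda_+ - \hat s - it$ stays bounded away from $0$ in a quantitatively sufficient way on $|t| < k^{-\alpha}$, which needs $\alpha$ restricted (implicitly $\alpha \le 1/2$, consistent with how such contour-width parameters are used in saddle point arguments). Everything else is a routine Taylor expansion of an explicit rational-plus-polynomial function, exploiting that $\hat s$ is a true critical point of the leading factor so that the $k^{1/2}$ term in $m'$ cancels exactly, leaving $k + O(1)$.
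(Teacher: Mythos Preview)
Your approach is the same as the paper's, which simply states that the estimates follow by straightforward computations from the explicit mgf. Your treatment of $m(\hat s,T)$, $m'(\hat s,T)$, and $m''(\hat s,T)$ is correct.

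The argument for $m'''$ is muddled, however. You need an \emph{upper} bound on $|\lambda_+ - \hat s - it|^{-4}$, hence a \emph{lower} bound on $|\lambda_+ - \hat s - it|$. Instead you drift into bounding the latter from above, $|\lambda_+ - \hat s - it| \le \xi^{1/2}k^{-1/2} + k^{-\alpha}$, which points the wrong way and leads to the unnecessary case analysis on~$\alpha$. The fix is immediate: since $\lambda_+ - \hat s - it = \xi^{1/2}k^{-1/2} - it$ has real part $\xi^{1/2}k^{-1/2}$, one has
\[
  |\lambda_+ - \hat s - it| = \sqrt{\xi k^{-1} + t^2} \;\ge\; \xi^{1/2}k^{-1/2},
\]
so $|\lambda_+ - \hat s - it|^{-4} \le \xi^{-2}k^2 = O(k^2)$, valid for \emph{every} real~$t$ and in particular for every $\alpha>0$, as the lemma asserts. (Incidentally, the saddle point analysis in the paper uses $\alpha\in(\tfrac23,\tfrac34)$, not $\alpha<\tfrac12$.) The remaining piece of $m'''$, involving $(\lambda_- + s)^{-4}$, is bounded on the strip because $\Re(\lambda_-+s)=\lambda_-+\hat s$ stays bounded away from zero, so it is $O(1)$ as you note.
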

\begin{proof}
  The estimates follow by straightforward computations from~\eqref{kou_M}.
\end{proof}
We now move the integration contour so that it passes through
the saddle point~$\hat{s}$. A small part of the contour, within
distance $O(k^{-\alpha})$
of the saddle point, captures the asymptotics of the full integral.
Any exponent $\alpha \in (\tfrac23, \tfrac34)$ is suitable. The integration
variable thus becomes $s=\hat{s}+i t$, $|t|<k^{-\alpha}$.
From the estimates for $m'$ and $m'''$ in
Lemma~\ref{kou_asym}, we have the local expansion
\[
  M(\hat{s}+it,T) = \exp\left(m(\hat{s},T) 
  + itk-\frac{m''(\hat{s},T)}{2}t^2\right)(1+O(t+t^3k^2)).
\]
The rational function $1/(s(s-1))$ is locally constant, to first order:
\[
  \frac{1}{(\hat{s}+it)(\hat{s}+it-1)} = 
  \frac{1}{\lambda_+(\lambda_+-1)}(1+O(k^{-1/2})).
\]
The integral close to the saddle point thus becomes
\begin{multline}\label{eq:int central}
  \frac{e^k}{2\pi i}\int_{\hat{s}-ik^{-\alpha}}^{\hat{s}+ik^{-\alpha}}
    e^{-ks}\frac{M(s,T)}{s(s-1)}ds \\
=\frac{e^{k(1-\hat{s})}M(\hat{s},T)}{2\pi\lambda_+(\lambda_+-1)}\int_{-k^{-\alpha}}^{k^{-\alpha}}\exp\left(-\frac{m''(\hat{s},T)}{2}t^2\right)(1+O(k^{-3\alpha+2}))dt.
\end{multline}
Setting $u:=m''(\hat{s},T)^{1/2}$, we get from Lemma~\ref{kou_asym}
\[
 u =  \frac{\sqrt{2}k^{3/4}}{\xi^{1/4}}(1+O(k^{-3/2})) \quad
 \text{and} \quad
  \frac{1}{u} =\frac{\xi^{1/4}}{\sqrt{2}k^{3/4}}(1+O(k^{-3/2})).
\]
By substituting $\omega = ut$ and using the fact that Gaussian integrals have exponentially decaying tails, we find
\begin{align}
\int_{-k^{-\alpha}}^{k^{-\alpha}}\exp\left(-\frac{m''(\hat{s},T)}{2}t^2\right)dt
  &=
\frac{1}{u}\int_{-uk^{-\alpha}}^{uk^{-\alpha}}\exp\left(-\frac{\omega ^2}{2}\right)d\omega  \notag \\
&= \sqrt{\pi}k^{-3/4}\xi^{1/4}(1 + O(k^{-3/2})). \label{eq:gauss}
\end{align}
Use this in~\eqref{eq:int central}, and replace $M(\hat{s},T)$ by
the estimates in Lemma~\ref{kou_asym}, to get the right hand side
of~\eqref{eq:kou call}.
Note that to obtain a relative error $k^{-1/4}$, and not just
$k^{-1/4+\varepsilon}$, it suffices to take one further term in the
local expansion, keeping in mind the well-known fact that the saddle
point method usually yields a full asymptotic expansion.
See~\cite{FrGeGuSt11} for a detailed discussion of this issue in
a similar analysis.
(The same remark applies in case of the Merton model below.)
To prove Theorem~\ref{thm:call kou}, it thus remains
to show that the integral over $|t|> k^{-\alpha}$ can be dropped, so
that~\eqref{eq:int central} asymptotically equals~\eqref{call_mellin}.
This tail estimate is done in the following lemma. By symmetry, it suffices
to treat $t>k^{-\alpha}$.

\begin{lemma} Let $\tfrac23 < \alpha < \tfrac34$. Then we have
\[
\frac{e^k}{2\pi i} \int_{\hat{s}+ik^{\alpha}}^{\hat{s} +i \infty}e^{-ks}\frac{M(s,T)}{s(s-1)}ds \ll e^{k(1-\lambda_+)+2\xi^{1/2}k^{1/2}-\xi^{-1/2}k^{3/2-2\alpha}/2}.
\]
\end{lemma}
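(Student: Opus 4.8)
The plan is to bound the tail integral by the integral of the absolute value of the integrand along the vertical half-line $s=\hat s+it$, $t>k^{-\alpha}$, and to show that $|M(\hat s+it,T)|$ decays fast enough in $t$ that the whole tail contributes at most the claimed bound. First I would write $s=\hat s+it$ and estimate $|e^{-ks}| = e^{-k\hat s} = e^{-k(\lambda_+-\xi^{1/2}k^{-1/2})} = e^{-k\lambda_+ + \xi^{1/2}k^{1/2}}$, which together with the leading factor $e^k$ already produces the factor $e^{k(1-\lambda_+)+\xi^{1/2}k^{1/2}}$; one more factor $e^{\xi^{1/2}k^{1/2}}$ (making the exponent $2\xi^{1/2}k^{1/2}$ as in the statement) will come out of the size of $M$ at the start of the tail.

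Next I would control $|M(\hat s+it,T)|$ using the explicit formula~\eqref{kou_M}. The real part of the exponent of $M$ is
\[
  T\Re\!\left(\frac{\sigma^2 s^2}{2}+bs+\lambda\left(\frac{\lambda_+ p}{\lambda_+-s}+\frac{\lambda_-(1-p)}{\lambda_-+s}-1\right)\right).
\]
The dangerous term is $\lambda_+ p/(\lambda_+-s)$ with $\lambda_+-s = \xi^{1/2}k^{-1/2}-it$, whose real part is $\xi^{1/2}k^{-1/2}/((\xi^{1/2}k^{-1/2})^2+t^2)$. This is a strictly decreasing function of $t$ on $(0,\infty)$, so on the tail $t>k^{-\alpha}$ it is maximized at $t=k^{-\alpha}$, where (since $\alpha<3/4$ makes $k^{-\alpha}$ dominate $\xi^{1/2}k^{-1/2}$) it behaves like $\xi^{1/2}k^{-1/2}/k^{-2\alpha} = \xi^{1/2}k^{2\alpha-1/2}$; hmm — this is the wrong sign of growth. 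Let me reconsider: we actually want to show the real part of the $1/(\lambda_+-s)$ term, which is positive and contributes $e^{+\cdots}$, is not too large, and simultaneously exploit the Gaussian factor $e^{T\sigma^2\Re(s^2)/2}$. Writing $\Re(s^2) = \hat s^2 - t^2$, the Gaussian term gives $e^{-T\sigma^2 t^2/2}\cdot e^{T\sigma^2\hat s^2/2}$, a genuine decay in $t$. So the right strategy is: on $k^{-\alpha}<t\le 1$ say, bound $\Re(1/(\lambda_+-s)) \le \xi^{1/2}k^{-1/2}/t^2 \le \xi^{1/2}k^{-1/2}k^{2\alpha} = \xi^{1/2}k^{2\alpha-1/2}$, and this must be dominated by the $k^{3/2}$-scale of $m''$... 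Actually the cleanest route, consistent with the stated bound, is to note that the central expansion already captures $M(\hat s,T)\sim e^{m(\hat s,T)}$ with $m(\hat s,T) = \xi^{1/2}k^{1/2}+O(1)$, and that for $t>k^{-\alpha}$ the quadratic term in the local expansion has already decayed: $\Re\big(m(\hat s,T) - m''(\hat s,T)t^2/2\big)$ evaluated at $t=k^{-\alpha}$ equals $\xi^{1/2}k^{1/2} - \tfrac12\cdot 2\xi^{-1/2}k^{3/2}\cdot k^{-2\alpha}+O(1) = \xi^{1/2}k^{1/2} - \xi^{-1/2}k^{3/2-2\alpha}+O(1)$, which is exactly the exponent $2\xi^{1/2}k^{1/2}-\xi^{-1/2}k^{3/2-2\alpha}/2$ after combining with the $e^{\xi^{1/2}k^{1/2}}$ from $|e^{-k\hat s}|$ and being a touch generous with constants. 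So I would prove that $\log|M(\hat s+it,T)|$ is, uniformly for $t\ge k^{-\alpha}$, bounded above by its value of the Gaussian model $m(\hat s,T) - c\,\xi^{-1/2}k^{3/2-2\alpha}$ for $k^{-\alpha}\le t\le \delta$, using monotonicity of $\Re(1/(\lambda_+-s))$ in $|t|$ to replace it by its value at $k^{-\alpha}$, and using the $e^{-T\sigma^2 t^2/2}$ factor for $|t|\ge\delta$ to get a crude exponential bound that is negligible; the remaining rational factor $1/|s(s-1)|$ and $1/|\lambda_-+s|$ type terms are $O(1)$ or better and contribute only polynomially.

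The main obstacle is the interplay between the two competing exponential effects on the tail: the $1/(\lambda_+-s)$ singularity makes $|M|$ \emph{large} as $\Re s\to\lambda_+$, but moving away from the real axis ($|t|$ growing) both shrinks this term (good) and, through $\Re(s^2)=\hat s^2-t^2$, activates the Gaussian decay (also good). I have to make sure the bound I extract at the junction $t=k^{-\alpha}$ is tight enough to match the stated $\exp(k(1-\lambda_+)+2\xi^{1/2}k^{1/2}-\xi^{-1/2}k^{3/2-2\alpha}/2)$ and not merely something like $\exp(k(1-\lambda_+)+C k^{1/2})$, which would be too weak to be absorbed by the central contribution's main term $\exp(-\alpha_1 k - \alpha_{1/2}k^{-1/2}-\alpha_0)k^{-3/4}$; this forces me to keep the $t^2$-term in the exponent (not just drop it) and to verify that $3/2-2\alpha>1/2$, i.e.\ $\alpha<1/2$ — wait, that contradicts $\alpha>2/3$; on reflection the point is not that this tail bound beats the \emph{main} term but that after integrating in $t$ (which only adds a polynomial factor, since $\int_{k^{-\alpha}}^\infty e^{-c k^{3/2}t^2}\,dt = O(k^{-3/4})$ by the substitution $\omega = u t$ already used in~\eqref{eq:gauss}) the result is $o$ of the central contribution, because $\xi^{-1/2}k^{3/2-2\alpha}\to\infty$ for $\alpha<3/4$ so $e^{-\xi^{-1/2}k^{3/2-2\alpha}/2}$ decays faster than any power of $k$; this is precisely why the range $\tfrac23<\alpha<\tfrac34$ is imposed, the lower bound $\alpha>2/3$ being what makes the error term $O(k^{-3\alpha+2})$ in~\eqref{eq:int central} tend to zero. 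So the proof reduces to: (i) the crude split of the contour, (ii) the monotonicity argument for $\Re(1/(\lambda_+-s))$ to reduce to $t=k^{-\alpha}$ on the near part, (iii) a trivial exponential bound using $e^{-T\sigma^2 t^2/2}$ on the far part, and (iv) assembling the constants, which I expect to be the fussiest but entirely mechanical part.
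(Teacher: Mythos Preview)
Your core idea --- use monotonicity of $\Re\bigl(\xi/(\lambda_+-s)\bigr)$ in $|t|$ to bound $|M(\hat s+it,T)|$ on the whole tail by its value at $t=k^{-\alpha}$ --- is exactly what the paper does. But you derail yourself with an arithmetic slip: you assert that ``$\alpha<3/4$ makes $k^{-\alpha}$ dominate $\xi^{1/2}k^{-1/2}$'', whereas in fact $\alpha>2/3>1/2$ forces $k^{-\alpha}\ll k^{-1/2}$. Hence in the denominator $(\xi^{1/2}k^{-1/2})^2+t^2=\xi k^{-1}+t^2$ evaluated at $t=k^{-\alpha}$ it is the term $\xi k^{-1}$ that dominates, not $t^2$. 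Writing that denominator as $\xi k^{-1}\bigl(1+\xi^{-1}k^{1-2\alpha}\bigr)$ and using $1/(1+x)\le 1-x/2$ for small $x$ (here $x=\xi^{-1}k^{1-2\alpha}\to0$) yields, uniformly for $t\ge k^{-\alpha}$,
\[
  \Re\frac{\xi}{\lambda_+-s}\;\le\;\xi^{1/2}k^{1/2}-\tfrac12\,\xi^{-1/2}k^{3/2-2\alpha}.
\]
Since the remaining terms of $m(s,T)$ have real part bounded above by $O(1)$ on the line $\Re s=\hat s$, this already gives the uniform estimate $|M(\hat s+it,T)|\ll\exp\bigl(\xi^{1/2}k^{1/2}-\tfrac12\xi^{-1/2}k^{3/2-2\alpha}\bigr)$, which is the whole content of the bound; your detour through the local expansion of $m$ and the computation $\xi^{1/2}k^{2\alpha-1/2}$ are artefacts of the reversed dominance.

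Once this is fixed, your near/far split and the Gaussian factor $e^{-T\sigma^2 t^2/2}$ become unnecessary. The paper just notes $|s(s-1)|\gg 1+t^2$ on the vertical line, pulls the uniform bound on $|M|$ outside the integral, and is left with $\int_{k^{-\alpha}}^\infty dt/(1+t^2)=O(1)$. Your proposed route via Gaussian decay would also secure integrability, but it is an avoidable complication: the rational factor $1/(s(s-1))$ already does the job.
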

\begin{proof}
Let $s=\hat{s}+it = \lambda_+-\xi^{1/2}k^{-1/2}+it$, where $t\geq k^{-\alpha}$.
Then
\begin{align*}
|M(s,T)| &\ll \exp\left(\Re\left(\frac{\xi}{\xi^{1/2}k^{-1/2}-it}\right)\right) \\
&= \exp\left(\frac{\xi^{1/2}k^{1/2}}{1+t^2\xi^{-1}k}\right) \ll \exp\left(\frac{\xi^{1/2}k^{1/2}}{1+k^{1-2\alpha}\xi^{-1}}\right).
\end{align*}
Using the fact $1/(1+x) \leq 1-x/2$ for $x \leq 1$ and that $k^{1-2\alpha}$ is smaller than 1 for sufficiently large $k$, we get
\begin{equation}\label{kou_tails2}
|M(s,T)| \ll \exp\left(\xi^{1/2}k^{1/2}-\xi^{-1/2}k^{3/2-2\alpha}/2\right).
\end{equation}
From $|s(s-1)| \gg 1+t^2$ and~\eqref{kou_tails2}, we obtain
\[
\frac{e^k}{2\pi}\left|e^{-ks}\frac{M(s,T)}{s(s-1)}\right| \ll \frac{e^{k(1-\lambda_+)+2\xi^{1/2}k^{1/2}-\xi^{-1/2}k^{3/2-2\alpha}/2}}{1+t^2},
\]
and thus
\begin{align*}
\frac{e^k}{2\pi i}\int_{\hat{s}+ik^{-\alpha}}^{\hat{s}+i\infty} e^{-ks}\frac{M(s,T)}{t(s-1)}ds &\ll e^{k(1-\lambda_+)+2\xi^{1/2}k^{1/2}-\xi^{-1/2}k^{3/2-2\alpha}/2}\int_{k^{-\alpha}}^{\infty} \frac{dt}{1+t^2} \\
 &\ll e^{k(1-\lambda_+)+2\xi^{1/2}k^{1/2}-\xi^{-1/2}k^{3/2-2\alpha}/2}.
\end{align*}
\end{proof}
The proof of Theorem~\ref{thm:call kou} is complete.
Finally, we mention that tail asymptotics for the distribution of~$X_T$
(for $\sigma^2=0$) can be found in~\cite{AlSu09}, Example~7.5.
They can also be deduced from earlier work of Embrechts et al.~\cite{EmJeMaTe85}.

\section{Merton jump diffusion model}

In one of his classical papers, Merton~\cite{Me76} proposed a L\'evy
jump diffusion with Gaussian jumps as a model for log-returns. If mean and variance
of the jump size distribution are~$\mu$ resp.~$\delta^2$,
then the mgf is the entire function
\[
M(s,T) =  \exp\left(T\left(\frac{1}{2}\sigma^2 s^2 + bs + \lambda(e^{\delta^2 s^2/2 + \mu s}-1)\right)\right).
\]
Benaim and Friz~\cite{BeFr09} gave first order logarithmic
asymptotics for the call price,
\begin{equation}\label{eq:merton call 1st order}
  L := -\log C(k,T) \sim \frac{\sqrt{2}}{\delta} k \sqrt{\log k},
    \quad k\to\infty,
\end{equation}
and first order asymptotics for implied volatility:
\begin{equation}\label{eq:merton iv 1st order}
  V(k) \sim 2^{-3/4} \sqrt{\delta k}/(\log k)^{1/4}.
\end{equation}
Our refined results are best formulated using the implicitly defined
saddle point~$\hat{s}=\hat{s}(k)$, satisfying $m'(\hat{s},T)=k$.
(As in the Kou model, we write $m=\log M$ for the cumulant
generating function.)
\begin{theorem}\label{mjd_call_expansion}
For $k \to \infty$, the call price in the Merton model satisfies
\begin{align}
C(k,T) &= \frac{\delta^2 e^{k(1-\hat{s})}M(\hat{s},T)}{2\log k \sqrt{2\pi m''(\hat{s},T)}}\left(1+O\left(\frac{1}{\sqrt{\log k}}\right)\right) 
  \label{eq:merton call s} \\
&= \frac{\delta^2 e^{k(1-\hat{s}) + T\left(\sigma^2\hat{s}^2/2 + b\hat{s} + \lambda\left(e^{\delta^2\hat{s}^2/2 + \mu\hat{s}}-1\right)\right)}}{2\log k\sqrt{2\pi T\left(\sigma^2 + \lambda(\delta^2\hat{s}+\mu)\left((\delta^2\hat{s}+\mu)+\delta^2\right)e^{\delta^2\hat{s}^2/2+\mu\hat{s}}\right)}} \label{eq:merton call} \\
& \qquad  \cdot \left(1+O\left(\frac{1}{\sqrt{\log k}}\right)\right).  \notag
\end{align}
\end{theorem}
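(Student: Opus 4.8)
The plan is to mirror the saddle point analysis behind Theorem~\ref{thm:call kou}, now applied to the Fourier representation~\eqref{call_mellin}, which remains valid for the Merton model: there $M(\cdot,T)$ is entire and decays like a Gaussian as $|\Im s|\to\infty$, so any vertical line $\Re s=\eta>1$ is an admissible contour. Since $m''(s,T)>0$ for all real $s$ and $m'(s,T)\to\infty$ as $s\to\infty$, the saddle point $\hat s=\hat s(k)$ defined by $m'(\hat s,T)=k$ exists and is unique for large $k$. The first step is the Merton analogue of Lemma~\ref{kou_asym}: the defining equation
\[
  T\bigl(\sigma^2\hat{s}+b+\lambda(\delta^2\hat{s}+\mu)\exp(\tfrac12\delta^2\hat{s}^2+\mu\hat{s})\bigr)=k
\]
has dominant balance $\tfrac12\delta^2\hat{s}^2\sim\log k$, so $\hat s\sim\sqrt{2\log k}/\delta\to\infty$; one then deduces $\exp(\tfrac12\delta^2\hat s^2+\mu\hat s)\sim k/(T\lambda\delta^2\hat s)$, $m''(\hat s,T)\sim\delta^2\hat s\,k$, and more generally $m^{(j)}(\hat s,T)\asymp(\delta^2\hat s)^{j-1}k$ for $j\ge 2$. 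In particular $1/(\hat s(\hat s-1))=\tfrac{\delta^2}{2\log k}(1+O(1/\sqrt{\log k}))$, where the $\mu\hat s$ term in the exponent and the replacement of $\hat s-1$ by $\hat s$ each contribute at order $1/\sqrt{\log k}$; this identity produces the factor $\delta^2/(2\log k)$ in~\eqref{eq:merton call s} and is the source of the stated relative error.

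Next I would shift the contour onto $\Re s=\hat s$ (no poles are crossed, $M$ being entire), set $s=\hat s+it$, and split the integral at $|t|=\rho$ for a width $\rho=\rho(k)$ in a suitable window, for instance $\rho=k^{-2/5}$. On the central part one expands $m(\hat s+it,T)=m(\hat s,T)+itk-\tfrac12 m''(\hat s,T)t^2+O(|t|^3m'''(\hat s,T))$, whereupon the linear term $itk$ cancels the factor $e^{-itk}$ coming from $e^{k(1-s)}$; the rational factor $1/(s(s-1))$ is locally constant, equal to $1/(\hat s(\hat s-1))$ up to a relative error that is harmless after integration; and $\int_{|t|\le\rho}\exp(-\tfrac12 m''(\hat s,T)t^2)\,dt=\sqrt{2\pi/m''(\hat s,T)}$ up to an exponentially small error, since $\rho\sqrt{m''(\hat s,T)}\to\infty$. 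Collecting the factors $e^{k(1-\hat s)}M(\hat s,T)$, $1/(\hat s(\hat s-1))$ and $\sqrt{2\pi/m''(\hat s,T)}$ reproduces the right-hand side of~\eqref{eq:merton call s}. The window for $\rho$ must satisfy $\rho^3m'''(\hat s,T)\to 0$ (accuracy of the quadratic approximation) and $\rho^2k/(\hat s\log k)\to\infty$ (so the tail beats the polynomial prefactor), and one checks these are compatible; as in the Kou case, carrying one further term in the local expansion upgrades the relative error to a clean $O(1/\sqrt{\log k})$ rather than $O((\log k)^{-1/2+\varepsilon})$.

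The remaining tail estimate is, I expect, the main obstacle, since $M$ is now entire and both the diffusion and the jump part must be exploited against an oscillatory integrand. For $s=\hat s+it$ one has
\[
  \Re m(\hat s+it,T)-m(\hat s,T)=-\tfrac12 T\sigma^2 t^2+T\lambda\exp(\tfrac12\delta^2\hat s^2+\mu\hat s)\bigl(e^{-\delta^2 t^2/2}\cos(\delta^2\hat s t+\mu t)-1\bigr),
\]
and bounding the cosine by $1$ and using $e^{-x}-1\le-\tfrac12 x$ for $0\le x\le1$ gives, for all $t$, $\Re m(\hat s+it,T)-m(\hat s,T)\le-\tfrac12 T\sigma^2 t^2-cT\lambda\exp(\tfrac12\delta^2\hat s^2+\mu\hat s)\min(t^2,1)$ with some $c>0$. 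Since $T\lambda\exp(\tfrac12\delta^2\hat s^2+\mu\hat s)\asymp k/\hat s$, this yields $|M(\hat s+it,T)|\le M(\hat s,T)\exp(-c'\rho^2 k/\hat s)$ for $|t|>\rho$, together with enough extra Gaussian decay in $t$ (from $\sigma>0$) to make the integral converge against the weight $1/|s(s-1)|\asymp 1/(\hat s^2+t^2)$; the result is super-polynomially smaller than the central contribution. Assembling the three pieces proves~\eqref{eq:merton call s}, and~\eqref{eq:merton call} follows by substituting the explicit Merton expressions for $M(\hat s,T)$ and $m''(\hat s,T)$. Beyond the tail bound, the one genuinely delicate feature — absent for Kou — is that $\hat s$ and all the quantities above are transcendental in $k$, so the error bookkeeping must be organised carefully enough to land precisely on the exponent $-\tfrac12$ of $\log k$.
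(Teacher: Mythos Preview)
Your proposal is correct and follows essentially the same route as the paper: the Fourier representation~\eqref{call_mellin} is shifted to the exact saddle point $\hat s$ (your $\rho=k^{-2/5}$ sits inside the paper's window $\alpha\in(\tfrac13,\tfrac12)$), the central Gaussian is extracted using the Merton analogue of Lemma~\ref{kou_asym}, the factor $\delta^2/(2\log k)$ arises from $1/(\hat s(\hat s-1))$, and the tail is killed by bounding $\cos(\cdot)\le 1$ in $\Re m(\hat s+it,T)$ together with the Gaussian decay from $\sigma>0$. The only cosmetic difference is that the paper freezes the tail bound at $t=k^{-\alpha}$ whereas you package it via $\min(t^2,1)$; both lead to the same super-polynomial damping $\exp(-c\,k^{1-2\alpha}/\sqrt{\log k})$.
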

\begin{corollary}\label{mjd_impvola_expansion}
The implied volatility of the Merton model satisfies, for $k \to \infty$,
\begin{align}
V(k) &= G_-\left(k,L-\frac{3}{2}\log L + \log \frac{k}{4\sqrt{\pi}} \right)
 + O(k^{-1/2}(\log k)^{-5/4}) \label{eq:merton V} \\
 &= 2^{-3/4} \sqrt{\delta k} (\log k)^{-1/4} +
 c  \sqrt{k} (\log k)^{-3/4} 
  + O\left(\frac{\sqrt{k} \log\log k}{(\log k)^{5/4}} \right),
  \label{eq:merton V expl}
\end{align}
where $L = -\log C(k,T)$ is the absolute log of the call price,
$G_-(k,u)=\sqrt{2}(\sqrt{u+k}-\sqrt{u})$, and
  $c = 2^{-9/4} \delta^{-1/2}(\mu+\delta^2) - 2^{-13/4}\delta^{3/2}$.
\end{corollary}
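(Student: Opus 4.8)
The plan is to obtain~\eqref{eq:merton V} by feeding the call price expansion of Theorem~\ref{mjd_call_expansion} into the appropriate call-price-to-implied-volatility transfer result of Gao and Lee~\cite{GaLe11}, exactly as in the Kou case, and then to turn the resulting closed form into the fully explicit expansion~\eqref{eq:merton V expl} by expanding the saddle point~$\hat s$ and everything built from it. The transfer result outputs $V(k)=G_-(k,\mathcal L)+(\mathrm{error})$ with the corrected argument $\mathcal L=L-\tfrac32\log L+\log\tfrac{k}{4\sqrt\pi}$, the two logarithmic terms being precisely the second-order corrections in the asymptotic inversion of the Black--Scholes formula (note that $v=G_-(k,u)$ is characterised by $d_1^2/2=u$, while $-\log c_{\mathrm{BS}}(k,v)=\tfrac{d_1^2}{2}-3\log v+2\log k+\tfrac12\log(2\pi)+\oh(1)$ as $k\to\infty$). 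Since~\eqref{eq:merton call s} gives the call price with relative error $\Oh((\log k)^{-1/2})$, we get $L=L_\ast+\Oh((\log k)^{-1/2})$ for an explicit $L_\ast$; as $L\asymp k\sqrt{\log k}\gg k$ one has $\partial_u G_-(k,u)=\tfrac{1}{\sqrt2}\bigl(\tfrac1{\sqrt{u+k}}-\tfrac1{\sqrt u}\bigr)\asymp k\,u^{-3/2}\asymp k^{-1/2}(\log k)^{-3/4}$ near $u=\mathcal L$, so the $\Oh((\log k)^{-1/2})$ uncertainty in $L$ propagates through $G_-$ to the error $\Oh(k^{-1/2}(\log k)^{-5/4})$ in~\eqref{eq:merton V}.

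For~\eqref{eq:merton V expl} one makes $L$, hence $\mathcal L$, hence $G_-(k,\mathcal L)$, explicit. First I would expand $\hat s$: taking logarithms in the saddle point equation $\lambda(\delta^2\hat s+\mu)e^{\delta^2\hat s^2/2+\mu\hat s}=k/T-\sigma^2\hat s-b$ gives $\delta^2\hat s^2/2+\mu\hat s+\log\hat s=\log k-\log(\lambda T\delta^2)+\Oh((\log k)^{-1/2})$, and solving by the Ansatz $\hat s=\delta^{-1}\sqrt{2(\log k-R)}$ and iterating once yields
\[
  \hat s=\frac{\sqrt{2\log k}}{\delta}-\frac{\mu}{\delta^2}+\Oh\!\Bigl(\frac{\log\log k}{\sqrt{\log k}}\Bigr).
\]
Using the saddle point relation to substitute $T\lambda e^{\delta^2\hat s^2/2+\mu\hat s}=(k-T\sigma^2\hat s-Tb)/(\delta^2\hat s+\mu)$ in the cumulant generating function and in $m''$, one finds $m(\hat s,T)=\Oh(k/\sqrt{\log k})$ and $m''(\hat s,T)\asymp k\sqrt{\log k}$, so that from~\eqref{eq:merton call s}
\[
  L=k\hat s-k-m(\hat s,T)+\tfrac12\log m''(\hat s,T)+\Oh(\log\log k)
   =A-\Bigl(\frac{\mu}{\delta^2}+1\Bigr)k+\Oh\!\Bigl(\frac{k\log\log k}{\sqrt{\log k}}\Bigr),
\]
with $A:=\tfrac{\sqrt2}{\delta}k\sqrt{\log k}$. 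Then $\mathcal L=A(1+\varepsilon)$ with $\varepsilon=-\tfrac{\mu+\delta^2}{\sqrt2\,\delta\sqrt{\log k}}+\Oh\!\bigl(\tfrac{\log\log k}{\log k}\bigr)$, since the $-\tfrac32\log L+\log\tfrac{k}{4\sqrt\pi}$ and $\tfrac12\log m''$ pieces perturb $A$ only by $\Oh(\log k)$.

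Finally I would substitute this into the expansion $G_-(k,u)=\tfrac{k}{\sqrt{2u}}-\tfrac{k^2}{2^{5/2}u^{3/2}}+\Oh(k^3u^{-5/2})$ (valid for $k/u\to0$): the term $\tfrac{k}{\sqrt{2A}}$ produces the leading $2^{-3/4}\sqrt{\delta k}(\log k)^{-1/4}$, the contribution $\tfrac{k}{\sqrt{2A}}\cdot(-\varepsilon/2)$ produces $2^{-9/4}\delta^{-1/2}(\mu+\delta^2)\sqrt k(\log k)^{-3/4}$, and $-\tfrac{k^2}{2^{5/2}A^{3/2}}$ produces $-2^{-13/4}\delta^{3/2}\sqrt k(\log k)^{-3/4}$, which together form the coefficient $c$; everything else (the $\Oh(\varepsilon^2)$ and $\Oh(k^3u^{-5/2})$ remainders, the $\Oh(k/\log k)$ and $\Oh(\log k)$ additive pieces of $L$, and the $\Oh(\log\log k/\log k)$ part of $\varepsilon$) is $\Oh(\sqrt k\,\log\log k\,(\log k)^{-5/4})$.

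The hard part is the bookkeeping in this last step: the danger is mis-sorting the numerous, awkwardly shaped terms coming from $\hat s$, $m(\hat s,T)$, $m''(\hat s,T)$ and $\log L$ into ``contributes to $c$'' versus ``goes into the error''. The subtle point is that the two $\Oh(k)$ additive contributions to $L$---the $-k$ from $k(1-\hat s)$ and the $-(\mu/\delta^2)k$ coming from the shift $\hat s-\delta^{-1}\sqrt{2\log k}=-\mu/\delta^2+\dots$---are of \emph{relative} size $(\log k)^{-1/2}$ and therefore do enter the $(\log k)^{-3/4}$-coefficient, whereas the superficially similar $\Oh(\log k)$ and $\Oh(\log\log k)$ additive pieces (from $\tfrac12\log m''$, from $-\tfrac32\log L+\log\tfrac{k}{4\sqrt\pi}$, and from $m(\hat s,T)$ itself) are negligible, while the next relative correction to $\mathcal L$, of order $\log\log k/\log k$, lands exactly at the announced error scale---which is precisely why $\log\log k$ shows up in~\eqref{eq:merton V expl}.
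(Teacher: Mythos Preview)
Your approach is essentially the paper's: invoke Gao--Lee (their Corollary~7.1) for~\eqref{eq:merton V}, then expand $\hat s$, $L$, and $G_-$ exactly as you describe to obtain~\eqref{eq:merton V expl}; your formula for $L$ matches the paper's~\eqref{eq:L}, and the bookkeeping you outline for the coefficient~$c$ is correct.

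One small point of confusion: in~\eqref{eq:merton V} the quantity $L=-\log C(k,T)$ is the \emph{exact} log call price, so the $O(k^{-1/2}(\log k)^{-5/4})$ error is not a propagation of the $O((\log k)^{-1/2})$ call-price approximation error through~$G_-$ as you write; it is the intrinsic error of Gao--Lee's Corollary~7.1, whose size depends only on the order of magnitude of~$L$ (which is why the paper says the error term follows from the first-order asymptotics~\eqref{eq:merton call 1st order}). Your derivative computation $\partial_u G_-\asymp k^{-1/2}(\log k)^{-3/4}$ is relevant to bounding how perturbations of $\mathcal L$ affect $G_-$, but the Gao--Lee remainder itself must be read off from their theorem. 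The numerical order you obtain happens to be correct, but the justification should be rephrased.
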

\begin{theorem}\label{mjd_density} 
The density of the Merton log-return $X_T$ satisfies, for $k \to \infty$,
\begin{align}
f_{X_T}(k) &= \frac{e^{-k\hat{s}}M(\hat{s},T)}{\sqrt{2\pi m''(\hat{s},T)}}(1+O(k^{-1/2})) \label{eq:merton dens raw} \\
&= \frac{e^{-k\hat{s} + T\left(\sigma^2/2\hat{s}^2 + b\hat{s} + \lambda\left(e^{\delta^2/2\hat{s}^2 + \mu\hat{s}}-1\right)\right)}}{\sqrt{2\pi T\left(\sigma^2 + \lambda(\delta^2\hat{s}+\mu)\left((\delta^2\hat{s}+\mu)+\delta^2\right)e^{\delta^2\hat{s}^2/2+\mu\hat{s}}\right)}}(1+O(k^{-1/2}))
  \label{eq:merton dens} \\
   &= \exp\left( -\frac{\sqrt{2}}{\delta} k \sqrt{\log k} + O(k)\right).
 \label{eq:merton dens expl}
\end{align}
\end{theorem}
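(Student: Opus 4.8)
The plan is to prove Theorem~\ref{mjd_density} by a saddle point analysis of the Fourier (Bromwich) integral representation of the density, entirely parallel to the call price analysis behind Theorem~\ref{mjd_call_expansion}. Since the Merton mgf $M(s,T)$ is entire and decays rapidly for $|\Im s|\to\infty$ on any vertical line, we have
\[
  f_{X_T}(k) = \frac{1}{2\pi \im}\int_{\hat s - \im\infty}^{\hat s+\im\infty} e^{-ks} M(s,T)\,\dd s,
\]
and we are free to place the contour through the real saddle point $\hat s=\hat s(k)$ defined by $m'(\hat s,T)=k$. The first step is to record the asymptotics of $\hat s$ and of the relevant derivatives of $m$ at $\hat s$: from $m'(s,T)=T(\sigma^2 s + b + \lambda(\delta^2 s+\mu)e^{\delta^2 s^2/2+\mu s})$ one reads off that the dominant balance is $\lambda T \delta^2 \hat s\, e^{\delta^2\hat s^2/2} \sim k$, whence $\hat s \sim \delta^{-1}\sqrt{2\log k}$, and more precisely $\delta^2\hat s^2/2 = \log k - \log\log k + O(1)$ with further terms obtainable by bootstrapping. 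Likewise $m''(\hat s,T) = T(\sigma^2 + \lambda(\delta^2\hat s+\mu)((\delta^2\hat s+\mu)+\delta^2)e^{\delta^2\hat s^2/2+\mu\hat s})$, so $m''(\hat s,T)$ grows like $k\log k$ up to constants, giving the $\sqrt{\log k}$ denominator. One also needs $m'''(\hat s+\im t,T) = O(\text{something})$ uniformly for $|t|\le k^{-\alpha}$ to control the local Taylor error; since every derivative of $m$ picks up a factor of order $\delta^2\hat s \asymp \sqrt{\log k}$ relative to the previous one, $m'''(\hat s+\im t,T)/m''(\hat s,T)^{3/2} \to 0$, which is what makes the standard saddle point heuristic valid.

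The second step is the local expansion near the saddle. Writing $s=\hat s+\im t$ and Taylor-expanding $m$,
\[
  M(\hat s+\im t,T) = M(\hat s,T)\exp\Bigl(\im t k - \tfrac12 m''(\hat s,T)t^2\Bigr)\bigl(1+O(t^3 m'''(\hat s+\im t,T))\bigr),
\]
and the choice of a cutoff $|t|<k^{-\alpha}$ (an $\alpha$ such that $m''(\hat s,T)k^{-2\alpha}\to\infty$ while $m'''$-term $\times k^{-3\alpha}\to 0$; concretely $\alpha$ slightly above $1/2$ works since $m''\asymp k\log k$ and $m'''\asymp k(\log k)^{3/2}$) makes the central contribution
\[
  \frac{e^{-k\hat s}M(\hat s,T)}{2\pi}\int_{-k^{-\alpha}}^{k^{-\alpha}} e^{-m''(\hat s,T)t^2/2}\,\dd t
  = \frac{e^{-k\hat s}M(\hat s,T)}{\sqrt{2\pi m''(\hat s,T)}}\bigl(1+O(k^{-1/2})\bigr),
\]
after substituting $\omega=m''(\hat s,T)^{1/2}t$ and using that the Gaussian tails beyond $\omega = m''(\hat s,T)^{1/2}k^{-\alpha}\to\infty$ are super-polynomially small. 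This yields~\eqref{eq:merton dens raw}; substituting the explicit formulas for $m(\hat s,T)$ and $m''(\hat s,T)$ gives~\eqref{eq:merton dens}, and inserting the asymptotics $\hat s\sim\delta^{-1}\sqrt{2\log k}$ together with $m(\hat s,T) = \delta^2\hat s^2/2\cdot\lambda T(1+o(1)) + \dots$ — more simply, $-k\hat s + m(\hat s,T) = -k\hat s + O(k\sqrt{\log k}/\sqrt{\log k})$, note $k\hat s = \sqrt{2}\,\delta^{-1}k\sqrt{\log k}$ and all other terms in the exponent are $O(k)$ after one checks $\lambda T e^{\delta^2\hat s^2/2+\mu\hat s}\asymp k/\hat s\asymp k/\sqrt{\log k}$ — gives the coarse form~\eqref{eq:merton dens expl} with error $O(k)$.

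The third step is the tail estimate: show that $\int_{|t|>k^{-\alpha}} e^{-ks}M(s,T)\,\dd s$ is negligible compared to the main term. Here one bounds $|M(\hat s+\im t,T)| = \exp(\Re m(\hat s+\im t,T))$; the leading piece $\lambda T \exp(\delta^2(\hat s+\im t)^2/2+\mu(\hat s+\im t))$ has real part $\lambda T e^{\delta^2(\hat s^2-t^2)/2+\mu\hat s}\cos(\delta^2\hat s t+\mu t)$, which for $|t|\ge k^{-\alpha}$ is dominated by $\lambda T e^{\delta^2\hat s^2/2+\mu\hat s}e^{-\delta^2 t^2/2}$, giving a Gaussian-type decay in $t$ that beats the $1/(2\pi)$-weighted line integral; combined with $\Re m \le m(\hat s,T)$ on the full line plus the super-polynomial gain from $e^{-\delta^2 t^2/2}$ at $|t|=k^{-\alpha}$, the tail is smaller than the central contribution by a factor that is $o(k^{-1/2})$ (indeed much smaller). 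I expect this tail estimate to be the main technical obstacle, exactly as in the Kou case (cf.\ the final lemma there), because the entire mgf does not decay monotonically — one must handle the oscillation $\cos(\delta^2\hat s t+\mu t)$ carefully and split into a near-regime $k^{-\alpha}<|t|\lesssim \hat s^{-1}$ where the Gaussian factor $e^{-\delta^2 t^2/2}$ does the work, and a far-regime $|t|\gtrsim 1$ where the classical exponential decay of $M$ on vertical lines (from the $\sigma^2 s^2/2$ term dominating, or from the oscillatory integral structure) suffices; the intermediate range is where care with uniform constants is needed. Everything else is routine bootstrapping of $\hat s$ and bookkeeping of error terms, entirely analogous to the arguments already carried out for Theorem~\ref{thm:call kou}.
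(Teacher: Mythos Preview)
Your approach is essentially the paper's: saddle point analysis of the Fourier representation of the density, parallel to the call price argument, with the explicit form~\eqref{eq:merton dens} obtained by substitution and the coarse form~\eqref{eq:merton dens expl} by inserting the $\hat s$-asymptotics. One computational slip to fix: from $e^{\delta^2\hat s^2/2+\mu\hat s}\asymp k/\sqrt{\log k}$ and $(\delta^2\hat s+\mu)^2\asymp\log k$ you get $m''(\hat s,T)\asymp k\sqrt{\log k}$ (not $k\log k$) and $m'''\asymp k\log k$ (not $k(\log k)^{3/2}$); consequently the admissible cutoff range is $\alpha\in(\tfrac13,\tfrac12)$, not $\alpha$ slightly above $\tfrac12$ --- with $\alpha>\tfrac12$ the substituted Gaussian range $m''(\hat s,T)^{1/2}k^{-\alpha}$ does not tend to infinity. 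The tail estimate is also simpler than you anticipate: bounding $\cos(\cdot)\le 1$ and $e^{-\delta^2 t^2/2}\le e^{-\delta^2 k^{-2\alpha}/2}$ for $|t|\ge k^{-\alpha}$ handles the jump part uniformly, and the diffusion factor $e^{-T\sigma^2 t^2/2}$ (recall $\sigma>0$) supplies integrability in~$t$, so no regime splitting is needed.
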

\begin{figure}
\begin{center}
\includegraphics[height=2.6in]{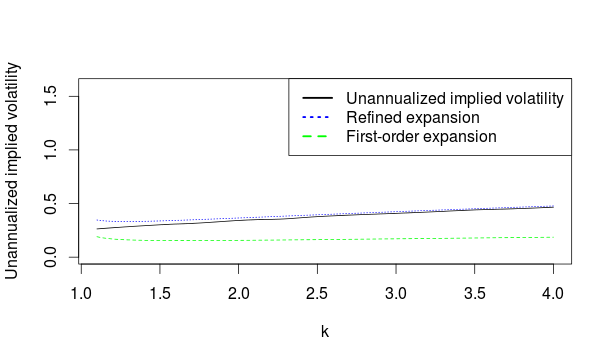}
\end{center}
\caption{\label{fig:merton iv} The solid curve is the implied vol
of the Merton model, with parameters $T=0.1$, $\sigma=0.4$, $\lambda=0.1$,
$\mu=0.3$, and $\delta=0.4$. The dashed line is the first order
approximation~\eqref{eq:merton iv 1st order},
whereas the dotted line is our refined
approximation~\eqref{eq:merton V}.}
\end{figure}
Formula~\eqref{eq:merton call} is just~\eqref{eq:merton call s} with~$M$
replaced by its explicit form, and the same holds for~\eqref{eq:merton dens}
and~\eqref{eq:merton dens raw}.
For numerical accuracy, it is preferable to use~\eqref{eq:merton V}
and~\eqref{eq:merton dens},
and not the more explicit variants~\eqref{eq:merton V expl}
and~\eqref{eq:merton dens expl}.

We begin the proofs by deducing~\eqref{eq:merton V expl}
and~\eqref{eq:merton dens expl} from the semi-explicit
formulas~\eqref{eq:merton V} and~\eqref{eq:merton dens},
using asymptotic approximations of~$\hat{s}$. The saddle point satisfies
\begin{equation}\label{eq:merton sp}
 \hat{s} = \frac{\sqrt{2\log k}}{\delta} - \frac{\mu}{\delta^2} +O\left(\frac{\log \log k}{\sqrt{\log k}}\right).
\end{equation}
This estimate follows from the saddle point equation  $m'(\hat{s},T)=k$ by
a tedious, but straightforward application of the classical
``bootstrapping'' technique (see, e.g., chapter~22 of~\cite{GrGrLo95}).
From the saddle point equation, we know that
\begin{equation}\label{eq:merton sp2}
  e^{\delta^2\hat{s}^2/2 + \mu \hat{s}} =
  \frac{k/T - \sigma^2\hat{s} - b}{\lambda(\delta^2\hat{s} + \mu)}
  \sim \frac{1}{\lambda \delta T \sqrt{2}} \frac{k}{\sqrt{\log k}}.
\end{equation}
Using these properties in~\eqref{eq:merton dens} yields~\eqref{eq:merton dens expl}.
For the density of the underlying itself we thus obtain
\begin{equation}\label{eq:dens S}
  f_{S_T}(k) =f_{X_T}(\log k)/k
  = k^{-(\sqrt{2}/\delta) \sqrt{\log \log k}} \cdot e^{O(\log k)}.
\end{equation}
The marginal law of the underlying has ``almost'' a power law tail,
due to the very slow increase of $\sqrt{\log \log k}$,
but it is still asymptotically lighter than that of any power law.
The influence of the model parameters seems a bit surprising here:
Neither the jump size nor the Poisson intensity appear in the main
factor in~\eqref{eq:dens S}, but only the jump size variance.

To obtain the explicit refined volatility expansion~\eqref{eq:merton V expl},
note that,
by~\eqref{eq:merton call}, \eqref{eq:merton sp}, and~\eqref{eq:merton sp2},
we can refine~\eqref{eq:merton call 1st order} to
\begin{equation}\label{eq:L}
  L = \frac{\sqrt{2}}{\delta} k \sqrt{\log k} - \frac{\mu+\delta^2}{\delta^2}k +
  O\left(\frac{k \log\log k}{\sqrt{\log k}} \right).
\end{equation}
Since
\[
  G_-(k,u) = \frac{\sqrt{2}}{2}k u^{-1/2} - \frac{\sqrt{2}}{8}k^2 u^{-3/2}
  +O(k^3 u^{-5/2}),
\]
using~\eqref{eq:L} in~\eqref{eq:merton V} yields~\eqref{eq:merton V expl}.

We proceed to the proofs of the first equalities in
Theorem~\ref{mjd_call_expansion} and Corollary~\ref{mjd_impvola_expansion}.
The proof of Theorem~\ref{mjd_density}  is very similar to that of
Theorem~\ref{mjd_call_expansion}, using the Fourier representation
of the density, and is omitted (see~\cite{Zr13}).
(Alternatively, the density asymptotics could be deduced from
its series representation, (4.12) in~\cite{CoTa04}, by the Laplace method.)
Formula~\eqref{eq:merton V} in Corollary~\ref{mjd_impvola_expansion}
is a special case of Corollary~7.1 in~\cite{GaLe11}; the error
term follows from~\eqref{eq:merton call 1st order}.
Our Theorem~\ref{mjd_call_expansion} is then useful for
approximating~$L$ in~\eqref{eq:merton V} numerically, or symbolically
to obtain the explicit expansion~\eqref{eq:merton V expl}; see above.
Note that our refined call approximation yields the second order
term (and higher order terms as well) in~\eqref{eq:merton V expl},
which cannot be be deduced from~\eqref{eq:merton call 1st order}.

It remains to prove~\eqref{eq:merton call s}.
Again, we appeal to the representation~\eqref{call_mellin}, where $\eta>1$,
and shift the integration contour to the saddle point~$\hat{s}$.
For the central range, we let $s=\hat{s} + it$, where $|t|<k^{-\alpha}$
with $\alpha \in (\tfrac13,\tfrac12)$.

\begin{lemma}\label{cumulants_mjd}
The cumulant generating function $m(s,T)=\log M(s,T)$ satisfies for, $k \to \infty$, 
\begin{align*}
 m(\hat{s},T) &= \frac{k}{\delta\sqrt{2\log  k}}\left(1+O\left(\frac{\log\log k}{\log k}\right)\right), \\
 m'(\hat{s},T) &= k, \\
 m''(\hat{s},T) &= \delta k \sqrt{2\log k}\left(1+O\left(\frac{1}{\sqrt{\log k}}\right)\right), \\
 m'''(\hat{s}+it,T) &= 2k\log k\left(1+O\left(\frac{1}{\log k}\right)\right),
\end{align*}
where $|t|<k^{-\alpha}$.
\end{lemma}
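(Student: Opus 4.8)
The plan is to compute everything directly from the explicit cumulant generating function
\[
  m(s,T) = T\left(\tfrac12\sigma^2 s^2 + bs + \lambda(e^{\delta^2 s^2/2 + \mu s}-1)\right),
\]
and then feed in the asymptotics of the saddle point $\hat s$ from~\eqref{eq:merton sp}. First I would record the derivatives in closed form: writing $E(s):=e^{\delta^2 s^2/2 + \mu s}$ for brevity, one has
\[
  m'(s,T) = T\bigl(\sigma^2 s + b + \lambda(\delta^2 s + \mu)E(s)\bigr),
\]
\[
  m''(s,T) = T\bigl(\sigma^2 + \lambda\bigl((\delta^2 s + \mu)^2 + \delta^2\bigr)E(s)\bigr),
\]
\[
  m'''(s,T) = T\lambda\bigl((\delta^2 s + \mu)^3 + 3\delta^2(\delta^2 s + \mu)\bigr)E(s).
\]
The identity $m'(\hat s,T)=k$ is then just the defining equation of the saddle point, so nothing is to be proved there; but it is exactly this identity, rearranged as in~\eqref{eq:merton sp2}, that makes everything else tractable, since it pins down $E(\hat s)$ up to the computable factor $k/(\lambda\delta T\sqrt 2\,\sqrt{\log k})\,(1+o(1))$.

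Next I would insert $\hat s = \delta^{-1}\sqrt{2\log k} - \mu/\delta^2 + O(\log\log k/\sqrt{\log k})$ and the consequence $\delta^2\hat s + \mu = \delta\sqrt{2\log k} + O(\log\log k/\sqrt{\log k})$ into the three formulas above. For $m''$ the dominant term is $T\lambda(\delta^2\hat s+\mu)^2 E(\hat s)$; using $(\delta^2\hat s+\mu)^2 = 2\delta^2\log k\,(1+O(\log\log k/\log k))$ together with~\eqref{eq:merton sp2} gives $m''(\hat s,T) = 2\delta^2\log k \cdot \frac{k}{\lambda\delta T\sqrt2\sqrt{\log k}}\cdot T\lambda\,(1+o(1)) = \delta k\sqrt{2\log k}\,(1+O(1/\sqrt{\log k}))$, the $\sigma^2$ and the $\delta^2 E(\hat s)$ terms being lower order. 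An identical substitution for $m'''$ produces $m'''(\hat s,T) = T\lambda(\delta^2\hat s+\mu)^3 E(\hat s)(1+o(1)) = 2k\log k\,(1+O(1/\log k))$. Finally $m(\hat s,T)$: here the $\lambda T E(\hat s)$ piece dominates, and \eqref{eq:merton sp2} gives directly $\lambda T E(\hat s) = \frac{k}{\delta\sqrt{2\log k}}(1+o(1))$, while the $\sigma^2\hat s^2$ term is $O(k/\sqrt{\log k}\cdot\sqrt{\log k}) = O(\log k)$ — wait, $\sigma^2\hat s^2/2 = O(\log k)$, which is negligible against $k/\sqrt{\log k}$ — and the $bs$ and $-\lambda$ terms are $O(\sqrt{\log k})$ and $O(1)$ respectively, hence all absorbed into the error.

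The one point requiring genuine care — and the main obstacle — is the claim for $m'''(\hat s+it,T)$ with $|t|<k^{-\alpha}$, $\alpha\in(\tfrac13,\tfrac12)$: the estimate must hold uniformly over the whole central segment, not just at $t=0$. The plan is to bound $|E(\hat s+it)| = E(\hat s)\,e^{-\delta^2 t^2/2}\le E(\hat s)$ and $|\delta^2(\hat s+it)+\mu| \le (\delta^2\hat s+\mu) + \delta^2|t| = (\delta^2\hat s+\mu)(1+O(k^{-\alpha}/\sqrt{\log k}))$, so that the perturbation in $t$ changes $m'''$ only by a factor $1+O(k^{-\alpha}) = 1+O(1/\log k)$, which is swallowed by the stated error term; the cubic polynomial factor likewise changes by a negligible amount. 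One should also double-check that the lower-order pieces ($\sigma^2$ in $m''$, and the $3\delta^2(\delta^2 s+\mu)$ in $m'''$) really are of smaller order after the substitution, and that the error in~\eqref{eq:merton sp} — only $O(\log\log k/\sqrt{\log k})$, not better — does not degrade the relative errors claimed here; a short computation shows it contributes $O(\log\log k/\log k)$ to $m(\hat s,T)$ (consistent with the statement) and $O(\log\log k/\log k)$ to $m''$, which is better than the stated $O(1/\sqrt{\log k})$, so there is room to spare. With these checks the four estimates follow by the "straightforward computations" the authors allude to.
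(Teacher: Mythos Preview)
Your proposal is correct and follows exactly the approach the paper has in mind: direct computation of $m$, $m'$, $m''$, $m'''$ from the explicit cumulant generating function, then substitution of the saddle point asymptotics~\eqref{eq:merton sp} and the key relation~\eqref{eq:merton sp2} (which, as you note, is just the defining equation $m'(\hat s,T)=k$ rearranged). The paper's own proof is a single sentence deferring to these ``straightforward computations,'' so your write-up---including the uniformity check in~$t$ for the third derivative---is already more detailed than what appears there.
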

\begin{proof}
  As for the Kou model, these expansions follow by a straightforward
  computation from the explicit mgf. For the second equation,
  note that we are using the exact saddle point, and not an approximation
  as we did in the Kou model.
\end{proof}
Since the saddle point tends to infinity, the rational function
$1/(s(s-1))$ locally tends to zero:
\[
  \frac{1}{(\hat{s}+it)(\hat{s}-1+it)}
  = \frac{\delta^2}{2\log k}\left(1+O\left(\frac{1}{\sqrt{\log  k}}\right)\right).
\]
(We have used~\eqref{eq:merton sp} here.)
Using this and Lemma~\ref{cumulants_mjd}, we see that the integral
over the central range has the asymptotics~\eqref{eq:merton call s},
after handling the Gaussian integral as in~\eqref{eq:gauss}.
To complete the proof, we need to provide a tail estimate, to the
effect that the integral outside of $s=\hat{s} + it$ with $|t|<k^{-\alpha}$
is negligible. Again, it suffices to treat the upper portion, by symmetry.
\begin{lemma}\label{mjd_tails}
Let $ \tfrac13 < \alpha < \tfrac12$. Then we have
\[
\frac{e^k}{2\pi i}\int_{\hat{s} +ik^{-\alpha}}^{\hat{s}+i\infty} \frac{e^{-ks}M(s,T)}{s(s-1)}ds \ll \frac{e^{k(1-\hat{s})}M(\hat{s},T)e^{-\delta k^{1-2\alpha}/(2\sqrt{2\log k})}}{\log k}.
\]
\end{lemma}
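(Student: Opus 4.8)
The plan is to follow the same strategy as in the Kou model; the only model-specific input is a bound for $|M(s,T)|$ along the vertical line through the saddle point. Write $s=\hat{s}+it$ with $t\ge k^{-\alpha}$ and take real parts in the exponent of the Merton mgf. Using $\Re(s^2)=\hat{s}^2-t^2$ and $|e^{\delta^2 s^2/2+\mu s}| = e^{\delta^2(\hat{s}^2-t^2)/2+\mu\hat{s}} = e^{\delta^2\hat{s}^2/2+\mu\hat{s}}e^{-\delta^2 t^2/2}$, together with $\Re(\lambda e^{\delta^2 s^2/2+\mu s})\le\lambda|e^{\delta^2 s^2/2+\mu s}|$, we get
\[
 |M(\hat{s}+it,T)| \le M(\hat{s},T)\,e^{g(t)},\qquad
 g(t):=T\Bigl(-\tfrac12\sigma^2 t^2+\lambda e^{\delta^2\hat{s}^2/2+\mu\hat{s}}\bigl(e^{-\delta^2 t^2/2}-1\bigr)\Bigr).
\]
Since $g$ is strictly decreasing on $(0,\infty)$, this yields $|M(\hat{s}+it,T)|\le M(\hat{s},T)e^{g(k^{-\alpha})}$ throughout the range of integration.

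This is the crux. Although $e^{-\delta^2 t^2/2}-1$ is only of order $k^{-2\alpha}$ at $t=k^{-\alpha}$, it multiplies the enormous quantity $\lambda e^{\delta^2\hat{s}^2/2+\mu\hat{s}}$, which by~\eqref{eq:merton sp2} equals $(1+o(1))\,k/(\delta T\sqrt{2\log k})$; combined with $e^{-x}-1=-x+O(x^2)$ this gives $g(k^{-\alpha}) = -(1+o(1))\,\delta k^{1-2\alpha}/(2\sqrt{2\log k})$, matching the exponent in the stated bound. For the denominator, $\hat{s}\to\infty$ by~\eqref{eq:merton sp}, so $|(\hat{s}+it)(\hat{s}-1+it)|\ge\hat{s}(\hat{s}-1)\gg\hat{s}^2\gg\log k$ uniformly in $t\ge0$. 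Combining these estimates with $|e^{-k(\hat{s}+it)}|=e^{-k\hat{s}}$, we arrive at
\[
 \Bigl|\frac{e^k}{2\pi i}\int_{\hat{s}+ik^{-\alpha}}^{\hat{s}+i\infty}\frac{e^{-ks}M(s,T)}{s(s-1)}ds\Bigr|
 \ll \frac{e^{k(1-\hat{s})}M(\hat{s},T)}{\log k}\int_{k^{-\alpha}}^{\infty}e^{g(t)}dt.
\]

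It then remains to estimate $\int_{k^{-\alpha}}^{\infty}e^{g(t)}dt$; I would split it at $t=1$. On $[k^{-\alpha},1]$ monotonicity of $g$ bounds the contribution by $e^{g(k^{-\alpha})}$. On $[1,\infty)$ we have $g(t)\le g(1)-\tfrac12 T\sigma^2(t^2-1)$ because $e^{-\delta^2 t^2/2}-1\le0$, so that part is $\ll e^{g(1)}$, and since $g(1)\asymp -k/\sqrt{\log k}$ while $g(k^{-\alpha})\asymp -k^{1-2\alpha}/\sqrt{\log k}$ with $k^{1-2\alpha}=o(k)$, it is negligible next to $e^{g(k^{-\alpha})}$. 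Hence $\int_{k^{-\alpha}}^{\infty}e^{g(t)}dt\ll e^{g(k^{-\alpha})}$ and the lemma follows. The $(1+o(1))$ in $g(k^{-\alpha})$ is harmless for the downstream use of the lemma in proving~\eqref{eq:merton call s}: there one only needs the tail to be smaller than the central contribution by more than any power of $k$, which is guaranteed since $\alpha<\tfrac12$ forces $k^{1-2\alpha}\to\infty$. The only genuine difficulty is recognizing, at the outset, that the decay along the vertical line originates from the Gaussian factor $e^{-\delta^2 t^2/2}$ amplified by the huge exponential $e^{\delta^2\hat{s}^2/2+\mu\hat{s}}$; once the bound on $|M|$ is in place, the rest is routine bookkeeping.
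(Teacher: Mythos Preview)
Your proof is correct and follows essentially the same route as the paper: both bound $|M(\hat{s}+it,T)|$ by taking real parts (equivalently $\cos\le1$), exploit the factor $e^{-\delta^2 t^2/2}$ amplified by $\lambda e^{\delta^2\hat{s}^2/2+\mu\hat{s}}\sim k/(\delta T\sqrt{2\log k})$ from~\eqref{eq:merton sp2}, and use $|s(s-1)|\gg\log k$. The only cosmetic difference is that the paper retains the diffusion factor $e^{-T\sigma^2 t^2/2}$ in its bound for $|M|$ and integrates it directly over $[k^{-\alpha},\infty)$, whereas you split at $t=1$ and use monotonicity of $g$; both reach the same conclusion, and your remark about the harmless $(1+o(1))$ in the exponent applies equally to the paper's derivation.
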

\begin{proof}
  We first bound $M(s,T)$:
  \begin{align*}
|M(s,T)| &= \exp\bigg(\Re\bigg(T\Big(\frac{\sigma^2}{2}s^2 + bs + \lambda\big(e^{\delta^2s^2/2 + \mu s}-1\big)\Big)\bigg)\bigg) \\
&=  \exp\bigg(T\Big(\frac{\sigma^2}{2}(\hat{s}^2-t^2) + b\hat{s} + \lambda\big(\cos(\delta^2 t \hat{s} +\mu t)e^{\delta^2(\hat{s}^2-t^2)/2 + \mu \hat{s}}-1\big)\Big)\bigg)\\
&\leq  \exp\bigg(T\Big(\frac{\sigma^2}{2}(\hat{s}^2-t^2) + b\hat{s} + \lambda\big(e^{\delta^2\hat{s}^2 /2 + \mu \hat{s}}e^{-\delta^2k^{-2\alpha}/2}-1\big)\Big)\bigg).
\end{align*}
  Using
  \[
    e^{-\delta^2k^{-2\alpha}/2} = 1 -\delta^2k^{-2\alpha}/2 + O(k^{-4\alpha})
  \]
  and~\eqref{eq:merton sp2}, we get
  \[
    M(s,T) \ll e^{-T\sigma^2t^2 /2} M(\hat{s},T)e^{-\delta k^{1-2\alpha}/(2\sqrt{2\log k})}.
  \]
  Since $|s(s-1)| \gg \log k$, this estimate implies
   \begin{align*}
\frac{e^k}{2\pi i}\int_{\hat{s} +ik^{-\alpha}}^{\hat{s}+i\infty} \frac{e^{-ks}M(s,T)}{s(s-1)}ds &\ll \frac{e^{k(1-\hat{s})}M(\hat{s},T)e^{-\delta k^{1-2\alpha}/(2\sqrt{2\log k})}}{\log k}\int_{k^{-\alpha}}^{\infty} e^{-\sigma^2t^2 T/2} dt \\
&\ll \frac{e^{k(1-\hat{s})}M(\hat{s},T)e^{-\delta k^{1-2\alpha}/(2\sqrt{2\log k})}}{\log k}.
\end{align*}
\end{proof}

\bibliographystyle{siam}
\bibliography{../gerhold}

\end{document}